\newcommand{\Sp}{\mathcal{X}}
\newcommand{\C}{\mathcal{C}}
\newcommand{\R}{\mathcal{R}}
\newcommand{\G}{\mathcal{G}}
\newcommand{\RR}{\mathbb{R}}
\newcommand{\ZZ}{\mathbb{Z}}
\DeclareMathOperator{\spann}{span}
\newtheorem{theorem}{Theorem}[section]
\newtheorem{proposition}[theorem]{Proposition}
\theoremstyle{remark}
\newtheorem{remark}{Remark}[section]
\theoremstyle{definition}
\newtheorem{definition}{Definition}[section]
\setlist[description]{wide, style=sameline}
\begin{document}

\title{Discrepancies between extinction events and boundary equilibria in reaction networks}


\author{David F.\ Anderson\footnotemark[1] \and Daniele Cappelletti\footnotemark[2]}

\footnotetext[1]{Department of Mathematics, University of Wisconsin-Madison, email: \texttt{anderson@math.wisc.edu}. Supported by  Army Research Office grant W911NF-18-1-0324.}
\footnotetext[2]{Department of Mathematics, University of Wisconsin-Madison, email: \texttt{cappelletti@math.wisc.edu}.}

 \tikzset{every node/.style={auto}}
 \tikzset{every state/.style={rectangle, minimum size=0pt, draw=none, font=\normalsize}}
  \tikzset{bend angle=15}

\maketitle
 
\begin{abstract}
Reaction networks are mathematical models of interacting chemical species that are primarily used in biochemistry. There are two modeling regimes that are typically used, one of which is deterministic  and one that is stochastic.  In particular, the deterministic model consists of an autonomous system of differential equations, whereas the stochastic system is a continuous time Markov chain.
Connections between the two modeling regimes have been studied since the seminal paper by Kurtz (1972), where the deterministic model is shown to be a limit of a properly rescaled stochastic model over compact time intervals. Further, more recent studies have connected the long-term behaviors of the two models when the reaction network satisfies certain graphical properties, such as weak reversibility and a deficiency of zero.

These connections have led some to conjecture a link between the long-term behavior of the two models exists, in some sense.
 In particular, one is tempted to believe  that positive recurrence of all states for the stochastic model implies the existence of positive equilibria in the deterministic setting, and that boundary equilibria of the deterministic model imply the occurrence of an extinction event in the stochastic setting. We prove in this paper that these implications do not hold in general, even if restricting the analysis to networks that are bimolecular and that conserve the total mass. In particular, we disprove the implications in the special case of models that have absolute concentration robustness, thus answering in the negative a  conjecture stated in   the literature  in 2014.

\end{abstract}

\section{Introduction}

Reaction systems are mathematical models that are used to describe the dynamical behavior of interacting chemical species.  Such models are often utilized in the biochemical setting, where they describe   biological processes.    Traditionally, we distinguish between a deterministic and a stochastic modeling regime, with the deterministic regime appropriate when the counts are so high that the concentrations of the species can be well modeled via a set of autonomous differential equations and with the stochastic model appropriate when the counts are low.  For the stochastic model, one usually assumes the counts of the different chemical species involved  evolve according to a continuous time Markov chain in $\ZZ_{\ge 0}^d$ (where $d$ is the number of distinct chemical species).

 It is natural to wonder about  the relationship between the stochastic and  deterministic models for reaction systems. The first paper in this direction was \cite{kurtz:classical}, where it was shown that on compact time intervals, the deterministic model is the weak limit of the stochastic model, conveniently rescaled, when the initial counts of molecules go to infinity in an appropriate manner (see Theorem \ref{thm:classical_scaling} below). Followup works consider piecewise deterministic limits in multiscale settings \cite{KK:multiscale, PP:multiscale, KR:multiscale, KR:enzyme}, and connections have been found between equilibria of the deterministic model and stationary distributions of the stochastic model, under certain assumptions \cite{ACK:poisson, AC2016,CW:poisson, CJ:graphical}. Further connections have been studied in terms of Lyapunov functions of the deterministic model and stationary distributions of the stochastic model \cite{ACGW:lyapunov}.

The focus of this paper is on the relationship (or lack thereof) between the occurrence of extinction events in the stochastic model and the equilibria of the corresponding deterministic model. The relevance of this question resides in the fact that equilibria of the deterministic model are typically easier to analyze than the state space of the stochastic model, so finding a link between the two is desirable. 
Moreover, understanding  when extinction events can occur is relevant in the biological setting as such events may imply that the production of a certain protein has halted, or that a certain important reactant is eventually consumed completely. 
However, we demonstrate through the analysis of a number of examples that a series of expected connections do not hold in general, and therefore discourage interested researchers from assuming them, or trying to prove them. In particular, we prove Conjecture 3.7 in \cite{AEJ:ACR} to be false. 
To better understand the work carried out here, we briefly describe some of the work in \cite{AEJ:ACR}.

In \cite{AEJ:ACR}, an interesting connection between extinction events and systems with \emph{absolute concentration robustness} (ACR)  is unveiled. ACR systems are deterministic reaction systems in which at least one chemical species has the same value at every positive equilibrium of the system. Such species are called absolute concentration robust (ACR) species. As an example, consider the network
\begin{equation}\label{eq:toy_model}
 \begin{split}
 \schemestart
 A+B \arrow{->[$\kappa_1$]} 2B
 \arrow(@c1.south east--.north east){0}[-90,.25]
 B \arrow{->[$\kappa_2$]} A   
\schemestop
\end{split}
\end{equation}
Under the assumption of mass-action kinetics, the considered system is ACR: the species $A$ has the value $\kappa_2/\kappa_1$ in all the positive equlibria of the system. When modeled stochastically, the reaction $B\to A$ can take place until no molecules of $B$ are left. When this happens, no reaction can take place anymore as each of the reactions in \eqref{eq:toy_model} requires a molecule of $B$ as a reactant. We call this an \emph{extinction event} and note that it eventually occurs with a probability of one, regardless of the rate constants $\kappa_1, \kappa_2\in \RR_{>0}$.  This differing qualitative behavior between the two models (robustness for the ODE and eventual extinction for the stochastic) was studied in \cite{AEJ:ACR} and was proven to be general, in the following sense.  In \cite{SF:ACR}, Shinar and Feinberg provided sufficient necessary conditions  for a deterministic reaction system to be ACR.  However, in  \cite{AEJ:ACR}  it was shown that  the stochastic model will, with a probability of one, undergo an extinction event if the reaction network satisfies those same conditions and also has a positive conservation relation.  Moreover, the eventual extinction holds regardless of choice of rate constants or initial condition. 

The assumptions of \cite{AEJ:ACR} (and by extension \cite{SF:ACR}) seem technical and do not unveil a clear reason for why the family of stochastic models considered have almost sure extinction events. Since under the same assumptions the corresponding deterministic system is ACR, it seems natural to conjecture that (i) absolute concentration robustness of the associated deterministic model, and (ii) the existence of a positive conservation relation, are sufficient to  imply almost sure extinction for the stochastic model. This is the content of Conjecture 3.7 in \cite{AEJ:ACR}. 

Belief in Conjecture 3.7 in \cite{AEJ:ACR} becomes even stronger when we realize that conservative ACR deterministic reaction systems often have boundary equilibria.   For example, in the model \eqref{eq:toy_model}, the deterministic model has boundary equilibria of the form $(a,0)$, which are attracting for initial conditions where the total mass is lower than $\kappa_2/\kappa_1$. In the stochastic model the species $B$ is eventually completely consumed, showing that the behaviors of the two models do have some connection.  Returning to the general setting, the existence of boundary equilibria occurs often for conservative ACR models since 
on certain invariant regions the total ``mass''  of the species (as determined by the positive conservation law) is  \emph{strictly less than}  the ACR value, implying there can be no positive equilibria in that invariant region.    Since the invariant regions are compact due to the existence of a mass conservation, the ODE solution is often attracted to the boundary. Intuitively, such attraction might indicate a propensity of the stochastic model to reach the boundary, and get absorbed. 
 
 However, in section \ref{sec:ACR}, we will show with non-trivial examples that the intuition the conjecture is based upon does not hold in general. In particular, we show that no assumption in the main result of \cite{AEJ:ACR} can be eliminated.  Moreover, we do so with bimolecular examples (which are the most commonly used examples in the biological setting).
 In section \ref{sec:stable_boundary}, we will further explore the connection (or rather, the lack thereof) between extinction events of the stochastic model and equilibria of the associated deterministic model, and put to rest the common misconceptions that (i) a complete lack of positive equilibria in  a conservative  deterministic model implies the occurrence of an extinction event in the stochastic model, and (ii) that positive recurrence of all the states of the stochastic model implies the existence of a positive equilibrium of the associated deterministic model.

The outline of the remainder of the paper is as follows.  In section \ref{sec:notation}, we provide the necessary material on notation, and the formal introduction of the relevant mathematical models.  In section \ref{sec:correspondence}, we will state the classical result from \cite{kurtz:classical, kurtz:strong} precisely.  This results provides a connection between the behavior of the stochastic and deterministic models on compact time intervals.  We will then provide  examples which demonstrate a discrepancy in the long-term behavior of the models (in terms of explosions and positive recurrence of the stochastic model with respect to blow-ups and compact trajectories of the deterministic model).  In section \ref{sec:extinction}, we provide the necessary definitions relating to extinctions in the present context.  In particular, we point out that extinctions should refer to both species and reactions, as opposed to just the counts of species. Finally, in sections \ref{sec:ACR} and  \ref{sec:stable_boundary}  we provide our main results and analyze a number of examples as described in the previous paragraph.

\section{Necessary Background and Notation}
\label{sec:notation}
\subsection{Notation}
 We denote the non-negative integer numbers by $\ZZ_{\geq0}$. We also denote the non-negative and the positive real numbers by $\RR_{\geq0}$ and $\RR_{>0}$, respectively. Given a real number $a$, we denote by $|a|$ its absolute value. For any real vector $v\in\RR^d$, we denote its $i$th entry by $v_i$, and we use the notation
 $$\|v\|_1=\sum_{i=1}^d |v_i|.$$
 We further write $v>0$ and say that $v$ is positive if every entry of $v$ is positive. Moreover, given two real vectors $v,w\in\RR^d$, we write $v\geq w$ if $v_i\geq w_i$ for all $1\leq i\leq d$. 
 Finally, given a set $\mathcal{A}$, we denote its cardinality by $|\mathcal{A}|$.

\subsection{Basic definitions of reaction network theory}\label{sec:background}
 
 
 A \emph{reaction network} is a triple $\G=(\Sp,\C,\R)$. $\Sp$ is a finite non-empty set of symbols, referred to as \emph{species}, and $\C$ is a finite non-empty set of linear combinations of species with non-negative integer coefficients, referred to as \emph{complexes}. After ordering the set of species, the $i$th species of the set can be identified with the vector $e_i\in\RR^{|\Sp|}$, whose $i$th entry is 1 and whose other entries are zero. It follows that any complex $y\in\C$ can be identified with a vector in $\ZZ^{|\Sp|}_{\geq0}$, which is the corresponding linear combination of the vectors $e_i$. Finally, $\R$ is a non-empty subset of $\C\times\C$, whose elements are called \emph{reactions}, such that for any $y\in\C$, $(y,y)\notin\R$. Following the common notation, we will denote any element $(y,y')\in\R$ by $y\to y'$.
 We require that every species in $\Sp$ appears in at least one complex, and that every complex in $\C$ appears as an element in at least one reaction. Note that under this condition, a reaction network is uniquely determined by the set of reactions $\R$.
 
 In this paper, species will always be letters and will be alphabetically ordered. 
 
 A directed graph can be associated in a very natural way to a reaction network by considering the set of complexes as nodes and the set of reactions ad directed edges. Such a graph is called a \emph{reaction graph}. Usually, a reaction network is presented by means of its reaction graph, which defines it uniquely. By using the reaction graph, we can further define the \emph{terminal} complexes as those nodes that are contained in some strongly connected component of the graph, that is those nodes $y$ such that for any directed path from $y$ to another complex $y'$ there exists a directed path from $y'$ to $y$. We say that a complex is \emph{non-terminal} if it is not terminal.
 
 We define the \emph{stoichiometric subspace} of a reaction network as
 $$S=\spann_{\RR}\{y'-y\,:\,y\to y'\in\R\}.$$
 We also let $\ell$ denote the number of connected components of the reaction graph (or \emph{linkage classes} of the network) and define the \emph{deficiency} of the network as
 $$\delta=|\C|-\ell-\dim S.$$
 The geometric interpretation of the deficiency is not clear from the above definition, see \cite{Gun2003} for more detailes on this object.
 
 We say that a vector $v\in\ZZ^{|\Sp|}$ is a \emph{conservation law} if it is orthogonal to the stoichiometric subspace $S$. We say that a network is \emph{conservative} if there is a  positive coservation law: this means that it is possible to assign a mass to the molecules of each chemical species such that the total mass (i.e.\ the sum of the masses of all the molecules present) is conserved by the occurrence of every reaction.
 
 We say that a reaction network is \emph{bimolecular} if $\max_{y\in\C}\|y\|_1\leq 2$. Many biological models fall into this category, since it is often the case that at most two molecules react at a time. 
 
Finally, we associate with each reaction $y\to y'\in\R$ a positive real number $\kappa_{y\to y'}$, called a \emph{rate constant}. A reaction network with a choice of rate constants is called a \emph{mass-action system}, and a stochastic or a deterministic dynamics can be associated with it, as described later. A mass-action system is usually presented by means of the reaction graph, where the reactions have been labelled with the corresponding rate constant. An example of this can be found in \eqref{eq:toy_model}.

 
 \subsubsection{Stochastic Model}
 In a \emph{stochastic mass action system}, the evolution in time of the copy-numbers of molecules of the different chemical species is considered. Specifically, the  copy-numbers of the molecules of different chemical species at time $t\geq0$ form a vector $X(t)\in\ZZ_{\geq0}^{|\Sp|}$. The process $X$ is assumed to be a continuous time Markov chain. Specifically, for any two states $x, x'\in\ZZ_{\geq0}^{|\Sp|}$ the transition rate from $x$ to $x'$ is given by
 $$q(x,x')=\sum_{\substack{y\to y'\in\R\\y'-y=x'-x}}\lambda_{y\to y'}(x),$$
 where
 $$\lambda_{y\to y'}(x)=\kappa_{y\to y'}\prod_{i=1}^{|\Sp|}\prod_{j=0}^{y_i-1}(x_i-j)\quad\text{for }x\in\ZZ_{\geq0}^{|\Sp|}$$
 is the \emph{stochastic mass action rate function} of $y\to y'$. Note that $\lambda_{y\to y'}(x)>0$ if and ony if $x\geq y$, which prevent the entries of the process $X$ from becoming negative. Also, note that the process is confined within an \emph{stoichiometric compatibility class}, that is for every $t\geq0$
 $$X(t)\in \{X(0)+v\,:\, v\in S\}\cap \ZZ_{\geq0}^{|\Sp|}.$$
 
 It is worth noting that the stochastic mass-action kinetics follows from the assumption that the molecules of the different species are well-mixed, so the propensity of each reaction to take place is proportional to the number of possible sets of molecules that can give rise to an occurrence of the reaction. Other kinetics may arise in different scenarios, but in the present paper we are only concerned with mass action systems.
 
 \subsubsection{Deterministic Model}
 In a  \emph{deterministic mass action system} the evolution in time of the concentrations of the different chemical species is modeled. We consider the concentrations of the different chemical species at time $t\geq0$ as a vector $z(t)\in\RR_{\geq0}^{|\Sp|}$. It is then assumed that the function $z$ is a solution to the Ordinary Differential Equation (ODE)
  \begin{equation}
  \label{eq:dma1}
  	\frac{d}{dt}z(t) = g(z(t)),
  \end{equation}
 where
 $$g(x)=\sum_{y\to y'\in\R}(y'-y)\kappa_{y\to y'}\prod_{i=1}^{|\Sp|}x_i^{y_i}\quad\text{for }x\in\RR_{\geq0}^{|\Sp|}$$
 is the \emph{deterministic mass action species formation rate}. As in the stochastic case, the solution $z$ is confined within a stoichiometric compatibility class, meaning that for any $t\geq0$
 $$z(t)\in \{z(0)+v\,:\, v\in S\}\cap \RR_{\geq0}^{|\Sp|}.$$
 Finally, as for stochastic models, the choice of mass action kinetics corresponds to the assumption that the molecules of the different species are well-mixed. Other kinetics (such as Michaelis-Menten kinetics, Hill kinetics, power law kinetics) are considered in the literature, but are not dealt with in this paper.

 \section{Correspondences between the two modeling regimes and known discrepancies}
 \label{sec:correspondence}
 
 The aim of this section is to describe why it was believed that certain properties of the deterministic mass action system would imply the occurrence of extinction events for the stochastically modeled mass action system. Here, we briefly describe or give reference to some known connections between the two modeling regimes, and provide some warnings in the form of examples of discrepancies between the two models.
 
 \subsection{Connections}
 
 The first connection found between stochastic and deterministic models dates back to \cite{kurtz:classical}. The situation considered is the following: a reaction network $\G$ is given. It is assumed that the volume $V$ of a container where the chemical transformations occur is increased. The propensity of a reaction to occur changes with the volume, depending on how many molecules are needed (i.e.\ need to collide) for the reaction to take place. Specifically, the rate constants scale with the volume as
 $$\kappa^V_{y\to y'}=V^{1-\|y\|_1}\kappa_{y\to y'},$$
 for some fixed positive constants $\kappa_{y\to y'}$. A family of continuous-time Markov chain $\{X^V\}_{V}$ is then defined, with $X^V$ being the process associated with the stochastic mass action system with rate constants $\kappa^V_{y\to y'}$. Then, the following holds \cite{kurtz:classical,kurtz:strong}.
 \begin{theorem}\label{thm:classical_scaling}
  Assume that for a fixed positive state $z_0\in\RR_{>0}^{|\Sp|}$ and for all $\varepsilon>0$ we have
  $$\lim_{V\to\infty}P\Big(\Big|V^{-1}X^V(0)-z_0\Big|>\varepsilon\Big)=0.$$
  Moreover, assume that the solution $z$ of the ODE \eqref{eq:dma1} with $z(0)=z_0$ is unique and is defined up to a finite fixed time $T>0$. Then, for any $\varepsilon>0$
  $$\lim_{V\to\infty}P\Big(\sup_{t\in[0,T]}\Big|V^{-1}X^V(t)-z(t)\Big|>\varepsilon\Big)=0.$$
 \end{theorem}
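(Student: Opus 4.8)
The plan is to use Kurtz's random time change representation together with a localization argument and Gronwall's inequality. Writing $\bar X^V := V^{-1}X^V$ for the density-scaled process and $\bar\lambda^V_{y\to y'}(x) := V^{-1}\lambda^V_{y\to y'}(Vx)$ for the scaled rate function (with $\lambda^V$ the stochastic mass-action rate of the system with constants $\kappa^V_{y\to y'}$), one has, for independent unit-rate Poisson processes $\{Y_{y\to y'}\}_{y\to y'\in\R}$,
$$\bar X^V(t) = \bar X^V(0) + \sum_{y\to y'\in\R}(y'-y)\,\frac{1}{V}\,Y_{y\to y'}\!\left(V\int_0^t \bar\lambda^V_{y\to y'}(\bar X^V(s))\,ds\right).$$
Expanding the products in $\lambda^V_{y\to y'}$ shows $\bar\lambda^V_{y\to y'}(x) = \kappa_{y\to y'}\prod_{i=1}^{|\Sp|}x_i^{y_i} + O(V^{-1})$ uniformly on bounded sets, so the drift of $\bar X^V$ is, to leading order, exactly the species formation rate $g$ of \eqref{eq:dma1}. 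Centering each Poisson process as $Y(u)=u+\tilde Y(u)$, with $\tilde Y$ a martingale, I would decompose $\bar X^V(t)$ into the initial value $\bar X^V(0)$, a noise term $M^V(t)$ built from the $\tilde Y_{y\to y'}$, the drift $\int_0^t g(\bar X^V(s))\,ds$, and an $O(V^{-1})$ correction.

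Next I would localize. Since the ODE solution $z$ is continuous on $[0,T]$, its image lies within a bounded open set $U$ at distance at least $\rho>0$ from $\partial U$; on $\overline U$ each polynomial $x\mapsto\kappa_{y\to y'}\prod_i x_i^{y_i}$, and hence $g$, is Lipschitz (with some constant $L$), and the $\bar\lambda^V_{y\to y'}$ are uniformly bounded, say by $K$. Let $\tau_V$ be the first exit time of $\bar X^V$ from $U$. For $t\le T\wedge\tau_V$ the time arguments $V\int_0^t\bar\lambda^V_{y\to y'}(\bar X^V(s))\,ds$ are at most $VKT$, so by Doob's maximal inequality $E\big[\sup_{u\le KT}(V^{-1}\tilde Y_{y\to y'}(Vu))^2\big]\le 4KT/V\to 0$; hence $\sup_{t\le T\wedge\tau_V}|M^V(t)|\to 0$ in probability, the $O(V^{-1})$ correction vanishes, and $|\bar X^V(0)-z_0|\to 0$ in probability by hypothesis.

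Then I would invoke Gronwall's inequality: for $t\le T\wedge\tau_V$,
$$|\bar X^V(t)-z(t)|\le |\bar X^V(0)-z_0| + \varepsilon_V + L\int_0^t|\bar X^V(s)-z(s)|\,ds,$$
with $\varepsilon_V$ collecting the noise and correction terms, whence $\sup_{t\le T\wedge\tau_V}|\bar X^V(t)-z(t)|\le(|\bar X^V(0)-z_0|+\varepsilon_V)e^{LT}$. The right side converges to zero in probability; since on $\{\tau_V\le T\}$ the left side is bounded below by a positive constant (the jumps of $\bar X^V$ having size $O(1/V)$), this forces $P(\tau_V\le T)\to 0$, and on the complement the Gronwall bound yields the conclusion directly.

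The step deserving the most care is the self-referential coupling between the localization and the random time change: one must check that up to the exit time $\tau_V$ the Poisson processes are evaluated only on a volume-independent range, so the maximal inequality applies uniformly in $V$, and then close the loop by using the resulting smallness of the error on $[0,T\wedge\tau_V]$ to rule out the exit. The assumed uniqueness of $z$ on $[0,T]$ is precisely what makes the Gronwall comparison legitimate; without it the argument would only give subsequential convergence to \emph{some} solution.
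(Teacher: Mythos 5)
The paper does not prove this theorem itself but cites it from Kurtz's work, and your argument is precisely the standard proof from that source (random time change representation, law of large numbers for the centered Poisson processes via Doob's inequality, localization to a neighborhood of the trajectory, and Gronwall). The sketch is correct and complete in its essentials; the only quibble is that the final remark slightly misattributes the role of uniqueness, since for mass-action kinetics $g$ is polynomial and hence locally Lipschitz, so the Gronwall comparison is legitimate regardless and uniqueness of $z$ on $[0,T]$ is automatic.
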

 Roughly speaking, the theorem states the rescaled stochastic processes $X^V$ converge path-wise to the ODE solution of the deterministic mass action system, over compact intervals of time. The theorem also holds for more general kinetics than mass action kinetics, as long as the rate functions $\lambda_{y\to y'}$ are locally Lipschitz \cite{kurtz:strong}. Theorem \ref{thm:classical_scaling} has been extended for the multiscale setting \cite{KK:multiscale}. 
 
 It is interesting to note that  model reduction techniques over compact intervals of time also work for both the deterministic and stochastic models in the exact same manner. In particular, we refer to the assumptions under which intermediate species can be eliminated from a multiscale model, and to the description of the resulting simplified model \cite{CW:intermediate_stoc, CW:intermediate_det}.
 
 \subsection{Known discrepancies}
 
 Here we cite some known examples showing that compactness of the trajectories of a deterministic mass action system do not imply positive recurrence or even regularity (that is, lack of explosions) for the associated stochastic model. Moreover, we demonstrate that a blowup of the deterministic mass-action system does not in general imply explosions for the associated stochastic mass action system.
 
In \cite{ACKN:endotactic} it is shown that the mass action system
\begin{equation}\label{ex:endotactic_trans}
 \begin{split}
\schemestart
 0\arrow{->[$\kappa_1$]}2A+B\arrow{->[$\kappa_2$]}4A+4B\arrow{->[$\kappa_3$]}A
 \schemestop  
 \end{split}
\end{equation}
is transient (i.e.\ all states are transient) when stochastically modeled, for all choices of rate constants. For the deterministic mass action system, however, for any choice of rate constants there exists a compact set $K$ such that for all initial conditions $z(0)$, a $t^*>0$ exists with $z(t)\in K$ for all $t>t^*$ (the system is said to be \emph{permanent}, a property that in this case follows from the network being \emph{strongly endotactic} \cite{GMS:geometric}). Moreover, in \cite{ACKN:endotactic} it is also shown that for any choice of rate constants the mass action system
\begin{equation}\label{ex:endotactic_expl}
\begin{split}
 \schemestart
 0\arrow{->[$\kappa_1$]}2A\arrow{->[$\kappa_2$]}4A+B\arrow{->[$\kappa_2$]}6A+4B\arrow{->[$\kappa_3$]}3A
 \schemestop 
\end{split}
\end{equation}
is explosive (in the sense of \cite{norris:markov}) when stochastically modeled, while the deterministic mass action system is permanent as for \eqref{ex:endotactic_trans} (which again follows because the network is strongly endotactic).

Since Theorem~\ref{thm:classical_scaling} holds, we expect the time of drifting towards infinty of the processes $X^V$ associated with \eqref{ex:endotactic_trans} to increase with $V$. Similarly, the time until explosion of the processes  $X^V$ associated with \eqref{ex:endotactic_expl} necessarily  tends to infinity, as $V\to\infty$.
 
 We have shown examples of mass action systems that are somehow well behaved if deterministically modeled, and transient or explosive if stochastically modeled. For completeness, we also present here  an example of a mass action system that is positive recurrent (i.e.\ all states are positive recurrent) if stochastically modeled, while the associated determinisitc ODE solution has blow-ups for any positive initial condition. The system is discussed in \cite{ACKK:explosion} and is the following.
 \begin{equation}\label{ex:blow_pos_rec}
  \begin{split}
   \schemestart
   A\arrow{<=>[1][2]}2A\arrow{<=>[3][1]}3A\arrow{->[1]}4A.
   \schemestop
  \end{split}
 \end{equation}
It is worth citing here a very similar example, also discussed in \cite{ACKK:explosion}, where the behaviour of the two modeling regimes is similar. Consider the mass action system
 \begin{equation}\label{ex:no_blow_expl}
  \begin{split}
   \schemestart
   A\arrow{<=>[1][2]}2A\arrow{<=>[7][4]}3A\arrow{<=>[6][1]}4A\arrow{->[1]}5A
   \schemestop
  \end{split}
 \end{equation}
It is shown in \cite{ACKK:explosion} that the corresponding stochastic mass action system is explosive for any positive initial condition $X(0)$, and the associated deterministic mass action system has a blow up for any positive initial condition $z(0)$.

Other examples of coincidence between the long-term behaviour of the stochastic and deterministic models of mass action systems are given by the family of \emph{complex balanced systems}, and are studied in \cite{ACK:poisson, CW:poisson, CJ:graphical}. The existence of such connections and Theorem~\ref{thm:classical_scaling} contributed to the formulation of the conjecture that ``mass action systems with ACR species when deterministically modeled undergo an extinction event when stochastically modeled'', for the reasons concerning boundary equilibria of the deterministic model already discussed in the Introduction. However, we have shown in this section that the long-term dynamics of stochastically and deterministically modeled mass action systems can differ greatly.

 \section{Extinction} 
 \label{sec:extinction}
In this section, we will formally describe  what is meant by the term ``extinction'' in the present context.  We begin with the following standard definitions.
 \begin{definition}
  Consider a stochastic mass action system. We say that
  \begin{itemize}
   \item a state $x'$ is \emph{reachable} from $x$ if for some $t>0$
   $$P(X(t)=x'\, |\, X(0)=x)>0;$$
   \item a set $\Gamma\subseteq \ZZ_{\geq0}^{|\Sp|}$ is \emph{reachable} from $x$ if for some $t>0$
   $$P(X(t)\in\Gamma \, |\, X(0)=x)>0;$$
   \item a set $\Gamma\subseteq \ZZ_{\geq0}^{|\Sp|}$ is \emph{closed} if for all $t>0$
   $$P(X(t)\in\Gamma \, |\, X(0)\in\Gamma)=1;$$
   \item a reaction $y\to y'\in\R$ is \emph{active} at a state $x$ if $\lambda_{y\to y'}(x)>0$.
   \item a set $\Gamma\subseteq \ZZ_{\geq0}^{|\Sp|}$ is an \emph{extinction set} for the reaction $y\to y'\in\R$ if $\Gamma$ is closed and $y\to y'$ is not active at any state of $\Gamma$.
   \end{itemize}
 \end{definition}
 \begin{definition}\label{def:extinction}
  Consider a stochastic mass action system. We say that the process $X$ undergoes an \emph{extinction event} at time $t^*>0$ if there is a reaction $y\to y'\in\R$ and an extinction set $\Gamma$ for $y\to y'$ such that $X(t^*)\in\Gamma$ and $X(t^*-) \notin \Gamma$.
 \end{definition}
 The meaning of the above definition is the following: at a certain time $t^*$ the copy-number of some chemical species (or a set of chemical species) gets so low that a certain reaction can not occur anymore, and the loss is irreversible.

 In connection with Definition~\ref{def:extinction}, we give some useful results.
 \begin{proposition}\label{prop:consecutio}
  Consider a stochastic mass action system, and two reactions $y\to y', \tilde y\to \tilde y'\in\R$, with $y'\geq \tilde y$ . Assume that no extinction set for $y\to y'$ is reachable from the state $x$. Then, no extinction set for $\tilde y\to \tilde y'$ is reachable from $x$.
 \end{proposition}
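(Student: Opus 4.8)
The plan is to argue by contraposition: I will assume that some extinction set $\tilde\Gamma$ for $\tilde y\to\tilde y'$ is reachable from $x$, and then show that this very same set $\tilde\Gamma$ is already an extinction set for $y\to y'$; since $\tilde\Gamma$ is reachable from $x$, this contradicts the hypothesis. The intuition behind this is that whenever the reaction $y\to y'$ fires, the chain lands in a state that dominates $y'$, hence dominates $\tilde y$ by the assumption $y'\geq\tilde y$, so that $\tilde y\to\tilde y'$ becomes active there; therefore $\tilde y\to\tilde y'$ can fail to be active on a closed set only if $y\to y'$ already fails to be active on it.

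First I would isolate the one structural fact about closed sets that the argument uses: if $\Gamma$ is closed, $w\in\Gamma$, and $q(w,w')>0$, then $w'\in\Gamma$. This holds because the process started at $w$ reaches $w'$ with positive probability at small times $t>0$ (for instance via the event of exactly one jump before time $t$), so $w'\notin\Gamma$ would violate $P(X(t)\in\Gamma\mid X(0)=w)=1$. In other words, a closed set is invariant under the one-step transitions of the chain.

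The main step is then the following. Suppose, toward a contradiction, that $y\to y'$ is active at some state $w\in\tilde\Gamma$, i.e.\ $w\geq y$. Firing $y\to y'$ sends the chain to $w'=w-y+y'$, and since $w-y\geq 0$ we obtain $w'\geq y'\geq\tilde y$, the last inequality being the hypothesis of the proposition. Moreover $q(w,w')\geq\lambda_{y\to y'}(w)>0$ because $w\geq y$, so the structural fact above gives $w'\in\tilde\Gamma$. But $w'\geq\tilde y$ means $\lambda_{\tilde y\to\tilde y'}(w')>0$, i.e.\ $\tilde y\to\tilde y'$ is active at the state $w'\in\tilde\Gamma$, contradicting that $\tilde\Gamma$ is an extinction set for $\tilde y\to\tilde y'$. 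Hence $y\to y'$ is active at no state of $\tilde\Gamma$; combined with the fact that $\tilde\Gamma$ is closed, this is exactly the statement that $\tilde\Gamma$ is an extinction set for $y\to y'$, and it is reachable from $x$, which is the desired contradiction.

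I expect the only genuinely delicate point to be a careful treatment of the notion of ``closed'': one must verify that closedness in the probabilistic sense translates faithfully into invariance under the single-step transitions $w\mapsto w'$ with $q(w,w')>0$, and that the transition produced by firing $y\to y'$ from a state $w\geq y$ really is present, which is immediate from $q(w,w')=\sum_{z\to z':\,z'-z=w'-w}\lambda_{z\to z'}(w)\geq\lambda_{y\to y'}(w)$ together with the identity $w'-w=y'-y$. Everything else reduces to the elementary equivalence $\lambda_{y\to y'}(w)>0\iff w\geq y$ and to the chain of inequalities $w\geq y\Rightarrow w-y+y'\geq y'\geq\tilde y$, so no real computation is involved.
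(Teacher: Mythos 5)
Your proof is correct and is essentially the paper's argument in contrapositive form: both hinge on the observation that firing $y\to y'$ from a state $w\geq y$ in a closed set lands at $w-y+y'\geq y'\geq\tilde y$, which stays in the set and activates $\tilde y\to\tilde y'$. The only difference is cosmetic (you start from a putative extinction set for $\tilde y\to\tilde y'$, the paper starts from an arbitrary reachable closed set), and you additionally spell out the one-step invariance of closed sets, which the paper uses implicitly.
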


 \begin{proof}
  Since no extinction set for $y\to y'$ is reachable from $x$,  any closed set $\Gamma$ that is reachable from $x$ contains a state $x'$ such that $\lambda_{y\to y'}(x')>0$, or equivalently $x'\geq y$. Hence, the state $x'+y'-y$ is reachable from $x'$, and since $\Gamma$ is closed it follows that $x'+y'-y\in\Gamma$. Moreover, since $x'\geq y$ and  $y'\geq \tilde y$ we have $x'+y'-y\geq y' \geq \tilde y$, which implies that $\lambda_{\tilde y\to \tilde y'}(x'+y'-y)>0$ and that $\Gamma$ is not an extinction set for $\tilde y\to\tilde y'$. This concludes the proof.
 \end{proof}

 \begin{proposition}\label{prop:dominatio}
  Consider a stochastic mass action system, and two reactions $y\to y', \tilde y\to \tilde y'\in\R$, with $y\geq \tilde y$. Assume that no extinction set for $y\to y'$ is reachable from the state $x$. Then, no extinction set for $\tilde y\to \tilde y'$ is reachable from $x$.
 \end{proposition}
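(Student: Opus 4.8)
The plan is to mirror the proof of Proposition~\ref{prop:consecutio}, exploiting the fact that the hypothesis $y\geq\tilde y$ makes the argument even more direct: unlike in Proposition~\ref{prop:consecutio}, I will not need to let any reaction fire. The only structural facts I will use are (i) that an extinction set is by definition closed, (ii) the characterization $\lambda_{y\to y'}(x)>0$ if and only if $x\geq y$ noted just after the definition of the stochastic mass action rate function, and (iii) transitivity of the componentwise order $\geq$ on $\ZZ^{|\Sp|}$.

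First I would reduce to showing that every closed set $\Gamma\subseteq\ZZ_{\geq0}^{|\Sp|}$ that is reachable from $x$ fails to be an extinction set for $\tilde y\to\tilde y'$, i.e.\ contains some state at which $\tilde y\to\tilde y'$ is active; since extinction sets are closed, this is precisely the claim that no extinction set for $\tilde y\to\tilde y'$ is reachable from $x$. Next, fix such a closed, reachable $\Gamma$. Because no extinction set for $y\to y'$ is reachable from $x$, and $\Gamma$ is closed and reachable from $x$, the set $\Gamma$ cannot be an extinction set for $y\to y'$; hence there is a state $x'\in\Gamma$ with $\lambda_{y\to y'}(x')>0$, equivalently $x'\geq y$.

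Finally, since $y\geq\tilde y$ componentwise, transitivity of $\geq$ gives $x'\geq\tilde y$, so that $\lambda_{\tilde y\to\tilde y'}(x')>0$ by the same characterization. Thus $\tilde y\to\tilde y'$ is active at the state $x'\in\Gamma$, so $\Gamma$ is not an extinction set for $\tilde y\to\tilde y'$. As $\Gamma$ was an arbitrary closed set reachable from $x$, no extinction set for $\tilde y\to\tilde y'$ is reachable from $x$, which is what we wanted.

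I do not anticipate a real obstacle here: the statement is essentially immediate once the monotonicity of the mass action rate function in the state is isolated, and in fact it is \emph{simpler} than Proposition~\ref{prop:consecutio}, since closedness of $\Gamma$ is used only to invoke the hypothesis (to produce the state $x'$ with $x'\geq y$) and not to propagate a fired reaction back into $\Gamma$. The one point worth stating carefully is the logical reduction in the first step — that ruling out extinction sets amounts to ruling out a defining property for \emph{all} closed reachable sets — which is exactly the phrasing used in the proof of Proposition~\ref{prop:consecutio}.
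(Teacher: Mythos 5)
Your proof is correct and follows essentially the same route as the paper's: extract a state $x'\geq y$ from any closed reachable set $\Gamma$ using the hypothesis on $y\to y'$, then conclude $x'\geq\tilde y$ by transitivity so that $\tilde y\to\tilde y'$ is active at $x'\in\Gamma$. Your added remark that this argument is simpler than that of Proposition~\ref{prop:consecutio} (no reaction needs to fire) matches the paper's treatment as well.
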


\begin{proof}
 As in the proof of Proposition~\ref{prop:consecutio}, since no extinction set for $y\to y'$ is reachable from $x$,  any closed set $\Gamma$ that is reachable from $x$ contains a state $x'$ with $\lambda_{y\to y'}(x')>0$, which is equivalent to $x'\geq y$. Since $ y\geq \tilde y$, we have $x'\geq \tilde{y}$, which is equivalent to $\lambda_{\tilde y\to \tilde y'}(x')>0$.  This concludes the proof.
\end{proof}

\section{Extinction and ACR systems}\label{sec:ACR}

We give here the formal definition of system that has Absolute Concentration Robustness (ACR) . The definition is purely in terms of deterministic systems, and the precise connection with extinction events will be described later.

\begin{definition}\label{def:ACR}
 Consider a deterministic mass action system. The system is said to be \emph{Absolute Concentration Robust} (ACR) if there exists an index $1\leq i\leq |\Sp|$ and a real number $u\in\RR_{>0}$ such that all $c>0$ with $g(c)=0$ satisfy $c_i=u$. In this case, the $i$th species is called an \emph{ACR species} with \emph{ACR value} $u$.
\end{definition}

Note that, by definition, all deterministic mass action systems with no positive equilibria, or with only one positive equilibrium are ACR. Specifically, in these cases all species are ACR. However, such degenerate cases elude the sense of the definition of ACR models, which captures an important biological property: whenever the system is at a positive equilibrium (and there could be many positive equilibria), some special chemical species are always expressed at the same level, and are therefore ``robust'' to environmental changes.

Structural sufficient conditions for a model to be ACR can be found in the following result, due to Feinberg and Shinar \cite{SF:ACR}.
\begin{theorem}\label{thm:ACR_det}
 Consider a deterministic mass action system and assume
 \begin{enumerate}
  \item there exists at least one $c>0$ with $g(c)=0$;
  \item the reaction network has deficiency 1;
 \item there are two non-terminal complexes $y\neq y'$ such that only the $i$th entry of $y'-y$ is different from 0.
 \end{enumerate}
 Then, the $i$th species is ACR.
\end{theorem}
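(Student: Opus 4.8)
The plan is to put the mass-action vector field into its standard bilinear form, use the deficiency-one hypothesis to constrain the positive equilibria, and then exploit the block structure of the reaction-graph Laplacian induced by the split of the complexes into terminal and non-terminal ones.

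Concretely, after ordering $\C=\{y_1,\dots,y_n\}$ I would introduce the matrix $Y$ with columns the complex vectors, the monomial vector $\Psi(x)\in\RR^n$ with $\Psi(x)_k=\prod_i x_i^{(y_k)_i}$ for $x\in\RR_{>0}^{\abs\Sp}$, and the weighted Laplacian $A_\kappa$ of the reaction graph, $(A_\kappa)_{jk}=\kappa_{y_k\to y_j}$ for $j\neq k$ and $(A_\kappa)_{kk}=-\sum_{y_k\to y'}\kappa_{y_k\to y'}$; a one-line computation gives $g(x)=Y A_\kappa\Psi(x)$. Writing $\omega_y$ for the standard basis of $\RR^n$ and $\tilde S=\spann_\RR\{\omega_{y'}-\omega_y:y\to y'\in\R\}$, each column of $A_\kappa$ lies in $\tilde S$, the map $Y$ sends $\tilde S$ onto $S$, and $\dim\tilde S=n-\ell$; hence $\dim(\ker Y\cap\tilde S)=\dim\tilde S-\dim S=\delta=1$, and I fix a spanning vector $\xi$ of this line.

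For a positive equilibrium $c$, the identity $Y A_\kappa\Psi(c)=g(c)=0$ forces $A_\kappa\Psi(c)\in\ker Y\cap\tilde S=\RR\xi$, so $A_\kappa\Psi(c)=\sigma_c\,\xi$ for some scalar $\sigma_c$. Splitting $\C=\mathcal N\sqcup\mathcal T$ into the non-terminal and terminal complexes, no reaction goes from a terminal complex to a non-terminal one, so $A_\kappa$ is block lower triangular for this split; consequently the restriction of $A_\kappa\Psi(c)$ to the coordinates in $\mathcal N$ equals $A_{\mathcal N\mathcal N}\,\Psi_{\mathcal N}(c)$, where $A_{\mathcal N\mathcal N}$ is the principal block on $\mathcal N$ and $\Psi_{\mathcal N}(c)$ the corresponding sub-vector. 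The key point is that $A_{\mathcal N\mathcal N}$ is invertible: its transpose is the generator of the reaction-graph random walk killed upon entering $\mathcal T$, and since from every non-terminal complex there is a directed path to a terminal one, this killed walk is absorbed with probability one, so $-A_{\mathcal N\mathcal N}$ is a nonsingular $M$-matrix. Therefore $\Psi_{\mathcal N}(c)=\sigma_c\,A_{\mathcal N\mathcal N}^{-1}\xi_{\mathcal N}$, and since $\Psi_{\mathcal N}(c)$ is a strictly positive vector it is nonzero, whence $\sigma_c\neq0$. Now given two positive equilibria $c$ and $c^*$ one obtains $\Psi_{\mathcal N}(c^*)=\rho\,\Psi_{\mathcal N}(c)$ with $\rho=\sigma_{c^*}/\sigma_c>0$ (the sign is forced since both vectors are positive), i.e. $(c^*)^y=\rho\,c^y$ for every non-terminal complex $y$. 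Taking the two non-terminal complexes $y\neq y'$ of hypothesis (3) and dividing the two identities gives $(c^*)^{y'-y}=c^{y'-y}$; since $y'-y$ has its only nonzero entry, say $\beta\neq0$, in position $i$, this reads $(c^*_i)^{\beta}=c_i^{\beta}$, so $c^*_i=c_i$. By hypothesis (1) at least one positive equilibrium exists, so the common value $u>0$ of the $i$-th coordinate over all positive equilibria is well defined, and the $i$-th species is ACR with ACR value $u$.

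I expect the two facts doing the real work to be (a) the dimension identity $\dim(\ker Y\cap\tilde S)=\delta$, which is what lets deficiency one collapse $A_\kappa\Psi(c)$ to a scalar multiple of a single vector, and (b) the invertibility of $A_{\mathcal N\mathcal N}$, which is the only place the non-terminality of $y$ and $y'$ is consumed. Of these, (b) is the step I would plan for most carefully, reducing it to the statement that every non-terminal complex reaches a terminal strong linkage class so that the killed random walk on the non-terminal complexes is absorbed almost surely; everything else is linear algebra together with the final monomial manipulation.
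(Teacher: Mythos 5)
The paper does not prove this theorem: it is quoted verbatim from Shinar and Feinberg \cite{SF:ACR}, so there is no in-paper proof to compare against. Your argument is, in essence, the standard proof from that reference — the decomposition $g=YA_\kappa\Psi$, the identity $\dim(\ker Y\cap\tilde S)=\delta$, the block-triangularity of $A_\kappa$ with respect to the terminal/non-terminal split, and the nonsingularity of the non-terminal block via the absorbed killed chain — and each step, including the final monomial cancellation $(c^*)^{y'-y}=c^{y'-y}$, is correct; the only point you should write out in full is the one you already flagged, namely that every non-terminal complex admits a directed path into a terminal strong linkage class, which makes $-A_{\mathcal N\mathcal N}$ nonsingular.
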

Note that the assumption on the existence of a positive equilibrium is not needed for the theorem to hold, since if there were none then all the species would  automatically be ACR by Definition~\ref{def:ACR}. However, the assumption was included because this degenerate case was not considered in \cite{SF:ACR}.

The connection with extinction events is given by the following result, due to Anderson, Enciso, and Johnston  \cite{AEJ:ACR}.
\begin{theorem}\label{thm:ACR_stoc}
 Consider a stochastic mass action system and assume
 \begin{enumerate}
   \item there exists at least one $c>0$ with $g(c)=0$;
  \item\label{part:def} the reaction network has deficiency 1;
  \item\label{part:non_term} there are two non-terminal complexes $y\neq y'$ such that only the $i$th entry of $y'-y$ is different from 0;
  \item\label{part:cons} the reaction network is conservative.
 \end{enumerate}
 Then, an extinction event occurs almost surely. In particular, with a probability of one the process $X$ enters a closed set $\Gamma\subset \ZZ^d_{\geq0}$ such that $\lambda_{y\to y'}(x)=0$ for all $x\in\Gamma$ and for all $y\to y'\in\R$ with $y$ being a non-terminal complex.
\end{theorem}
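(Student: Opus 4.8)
The plan is to use conservativity to collapse the problem onto a finite state space, and then to reduce the theorem to a purely structural statement about closed communicating classes. Since the network is conservative there is a positive conservation law $v$, so every stoichiometric compatibility class $(X(0)+S)\cap\ZZ_{\geq0}^{|\Sp|}$ is contained in the finite set $\{x\in\ZZ_{\geq0}^{|\Sp|}:v\cdot x=v\cdot X(0)\}$; hence $X$, confined to the class of $X(0)$, is a finite-state continuous-time Markov chain, and with probability one it is absorbed into one of the closed communicating classes of that finite chain. It therefore suffices to show that \emph{every} closed communicating class $\Gamma$ reachable from $X(0)$ is an extinction set for every reaction $y\to y'\in\R$ with $y$ non-terminal, i.e.\ that $\lambda_{y\to y'}\equiv0$ on $\Gamma$ for all such reactions. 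If no reaction is active on $\Gamma$ there is nothing to prove, and Propositions~\ref{prop:consecutio} and~\ref{prop:dominatio} let us trade one non-terminal reaction for another, so only a conveniently chosen non-terminal reaction in each linkage class needs to be handled.

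The second ingredient is a forward-propagation principle for closed sets. If $\Gamma$ is closed and $y\to y'$ is active at some $x\in\Gamma$, then $x\geq y$, so $x+y'-y\in\Gamma$ and $x+y'-y\geq y'$; hence every reaction out of $y'$ is active somewhere in $\Gamma$. Iterating, the set of complexes that are sources of reactions active somewhere in $\Gamma$ is closed under following directed edges of the reaction graph (until a complex with no outgoing reaction is reached). Consequently, if any reaction with non-terminal source is active in $\Gamma$, then following the graph forward from that source one reaches either a sink complex fed by an active reaction, or a terminal strongly connected component with at least two complexes all of whose internal reactions are active in $\Gamma$. In either case an entire ``downstream'' terminal portion of the reaction graph is switched on inside $\Gamma$, joined by an active directed path back to a non-terminal complex carrying an active reaction.

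The crux is to turn this configuration into a contradiction using deficiency one, the two non-terminal complexes $y\neq y'$ with $y'-y=m\,e_i$ ($m\neq0$), and the positive conservation law. Here I would reprise the Shinar--Feinberg structural analysis underlying Theorem~\ref{thm:ACR_det} \cite{SF:ACR}: writing $g=YA_\kappa\Psi$ with $Y$ the matrix whose columns are the complex vectors, $A_\kappa$ the matrix of the weighted reaction graph, and $\Psi(x)=(x^z)_{z\in\C}$, the identity $\dim S=(|\C|-\ell)-\delta$ means that deficiency one forces $\ker Y\cap\spann\{\omega_{z'}-\omega_z:z\to z'\in\R\}$ (the span of the reaction vectors inside $\ker Y$) to be one-dimensional, spanned by some $b$; at a positive equilibrium $c$ one has $A_\kappa\Psi(c)=\mu b$ for a scalar $\mu$, and reading this relation at the complexes $y$ and $y'$ is exactly what pins $c_i$ to a fixed value and — more usefully for us — rigidly constrains which reactions are incident to $y$ and $y'$ and how their stoichiometries compare. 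Combining that rigidity with the stationarity identity $\sum_{z\to z'\in\R}(z'-z)\,\mathbb{E}_\pi[\lambda_{z\to z'}(X)]=0$ for the stationary distribution $\pi$ on the finite class $\Gamma$, and with $v\cdot x\equiv v\cdot X(0)$ on $\Gamma$, I would argue that $\pi$ must concentrate on the states where every non-terminal reaction is inactive, contradicting the ``switched-on'' configuration of the previous paragraph.

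I expect this last step to be the main obstacle, for two reasons. First, the stochastic rates $\lambda_{z\to z'}$ are falling-factorial products, not the monomials $x^z$, so the Shinar--Feinberg equilibrium computation cannot be transplanted verbatim: in general there is no single $c$ with $\mathbb{E}_\pi[\lambda_{z\to z'}(X)]=\kappa_{z\to z'}c^z$ for every reaction, so one must reason combinatorially about which reactions can be simultaneously active on a closed communicating class, with the stationarity identity playing only a supporting role. Second, controlling all networks that meet the hypotheses seems to require the classification of deficiency-one networks, reducing to finitely many structural configurations each of which is then checked by hand — with Propositions~\ref{prop:consecutio} and~\ref{prop:dominatio} the tools that keep the case analysis bounded. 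Once the structural contradiction is in place, the stated conclusion — an almost-sure extinction event, with $X$ entering a closed $\Gamma$ on which $\lambda_{y\to y'}=0$ for every non-terminal $y$ — follows immediately from the finite-state absorption of the first paragraph.
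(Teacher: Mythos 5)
First, a point of order: the paper you are working from does not prove Theorem~\ref{thm:ACR_stoc} at all --- it is quoted as a known result of Anderson, Enciso, and Johnston and attributed to \cite{AEJ:ACR}, so there is no in-paper proof to compare against. Judged on its own terms, your proposal gets the outer architecture right, and it is essentially the architecture of the actual proof in \cite{AEJ:ACR}: conservativity makes each stoichiometric compatibility class finite, a finite continuous-time Markov chain is absorbed almost surely into a closed communicating class, and the theorem reduces to showing that every such class $\Gamma$ reachable from $X(0)$ has $\lambda_{y\to y'}\equiv 0$ on $\Gamma$ for every reaction whose source is non-terminal. Your forward-propagation observation is exactly Proposition~\ref{prop:consecutio} and is used correctly.

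The genuine gap is that the one step carrying all the content --- deriving a contradiction from the assumption that some non-terminal reaction is active on a closed class $\Gamma$ --- is only gestured at, and you say so yourself. The mechanism that actually closes it in \cite{AEJ:ACR} is the following: on the finite irreducible class $\Gamma$ there is a stationary distribution $\pi$, and the averaged intensities $\beta_{z\to z'}=\mathbb{E}_\pi\prq{\lambda_{z\to z'}(X)}\geq 0$ satisfy $\sum_{z\to z'\in\R}\beta_{z\to z'}(z'-z)=0$, so the complex-aggregated vector $\sum_{z\to z'}\beta_{z\to z'}(\omega_{z'}-\omega_z)$ lies in $\ker Y$ intersected with the span of the $\omega_{z'}-\omega_z$, a space of dimension exactly $\delta=1$; Feinberg's deficiency-one structure theory, together with condition~\eqref{part:non_term} and the existence of a positive equilibrium, then forces $\beta_{z\to z'}=0$ for every reaction with non-terminal source, whence $\lambda_{z\to z'}\equiv 0$ on $\Gamma$ by irreducibility. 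Your sketch points in roughly this direction but never pins down how the pair of non-terminal complexes $y,y'$ with $y'-y$ supported on one coordinate produces the contradiction, and your proposed fallback --- reducing to ``finitely many structural configurations'' of deficiency-one networks checked by hand --- is not viable: there is no such finite classification, and the argument must be (and in \cite{AEJ:ACR} is) a single structural computation in the one-dimensional space $\ker Y\cap\spann\{\omega_{z'}-\omega_z\}$, not a case analysis. You have also correctly flagged, but not resolved, the issue that $\mathbb{E}_\pi\prq{\lambda_{z\to z'}(X)}$ is not of the form $\kappa_{z\to z'}c^z$; the resolution is that the argument never needs a representing point $c$, only the sign and support information of the non-negative vector $\beta$. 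As written, the proposal is an outline with the decisive lemma missing.
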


As an example of how to apply Theorems~\ref{thm:ACR_det} and \ref{thm:ACR_stoc}, consider the mass-action system \eqref{eq:toy_model}. The deficiency is $\delta=4-2-1=1$, the two non-terminal complexes $A+B$ and $B$ are such that their difference is $A$, and there is at least one $c>0$ with $g(c)=0$. Hence, by Theorem~\ref{thm:ACR_det}, when deterministically modeled, the mass action \eqref{eq:toy_model} is ACR, and in particular the species $A$ appears with the same value at every positive equilibrium of the system. In fact, it can be easily checked that all positive equilibria are of the form
$\left(\frac{\kappa_2}{\kappa_1}, \beta\right)$
for some $\beta>0$, where the species are ordered as $(A,B)$. Moreover, the reaction network of \eqref{eq:toy_model} is conservative, since $(1,1)$ is a positive conservation law. Hence, by Theorem~\ref{thm:ACR_stoc}, the stochastically modeled mass action system will, with a probability of one, undergo  an extinction event.  In particular, both reactions $B\to A$ and $A+B\to 2B$ cannot occur after the extinction. Since the total mass is conserved, this can only mean that the molecules of $B$ are eventually completely consumed.

Note that under the assumptions of Theorem~\ref{thm:ACR_stoc}, the deterministically modeled mass action system is ACR. Since the assumption of Theorem~\ref{thm:ACR_stoc} are quite technical in nature, as discussed in the Introduction it was thought for a long time that the real reason leading to the extinction event in the stochastic mass action system was the associated deterministic mass action system being ACR. This seemed plausible since ACR systems often exhibit attracting boundary equilibria. As an example, if the total mass of the deterministic mass action system \eqref{eq:toy_model} is lower then the ACR value of $A$, that is if $\|z(0)\|<\kappa_2/\kappa_1$, then the ODE solution is confined within a compact stoichiometric compatibility class with no positive equilibria, and is eventually attracted by the boundary equilibrium $(z(0),0)$. This resembles what happens with the stochastically modeled system.

In this section we show that such intuition is not correct.  We do so by proving that if you remove any of the technical assumptions \eqref{part:def}, \eqref{part:non_term}, and \eqref{part:cons} from Theorem~\ref{thm:ACR_stoc}, while maintaining the absolute concentration robustness of the associated deterministic mass action system, then the result no longer holds. Moreover, the previous sentence holds even if we add the additional constraint that the reaction network  be bimolecular.

\subsection*{Example 1: assumption~\eqref{part:cons} cannot be removed}

In \cite{AEJ:ACR}, the authors realized that assumption~\eqref{part:cons} could not be removed from the statement of Theorem~\ref{thm:ACR_stoc}, and proved that with the following example. Consider the reaction network
\begin{equation}\label{eq:example0}
 \begin{split}
 \schemestart
 A+B \arrow{->[$\kappa_1$]} 0
 \arrow(@c1.south east--.north east){0}[-90,.25]
 B \arrow{->[$\kappa_2$]} A+2B   
\schemestop
\end{split}
\end{equation}
The mass action system satisfies the assumptions of Theorem~\ref{thm:ACR_det} and the species $A$ is ACR, with ACR equilibrium $\kappa_2/\kappa_1$. The only assumption of Theorem~\ref{thm:ACR_stoc} that is not met is \eqref{part:cons}, as the network is not conservative. Indeed, the stoichiometric subspace is 
$$S=\left\{\begin{pmatrix}s\\s\end{pmatrix}\,:\,s\in\RR\right\}$$
and no positive vector is contained in
$$S^\perp=\left\{\begin{pmatrix}s\\-s\end{pmatrix}\,:\,s\in\RR\right\}.$$
In \cite{AEJ:ACR}, it is shown that the stochastically modeled mass action system does not undergo any extinction event if $X_1(0)<X_2(0)$ (remember we assume the species to be alphabetically ordered). In this case, the stationary distribution is computed by using birth and death process techniques, and every state is shown to be positive recurrent.

We can modify \eqref{eq:example0} to obtain the bimolecular mass action system
\begin{equation}\label{eq:example0_bi}
 \begin{split}
 \schemestart
 A+B \arrow{->[$\kappa_1$]} 0
 \arrow(@c1.south east--.north east){0}[-90,.25]
 B \arrow{->[$\kappa_2$]} A+C   
  \arrow(@c3.south east--.north east){0}[-90,.25]
 C \arrow{<=>[$\kappa_3$][$\kappa_4$]} 2B   
\schemestop
\end{split}
\end{equation}
The mass action system \eqref{eq:example0_bi} still satisfies the assumptions of Theorem~\ref{thm:ACR_det}, and it can therefore be concluded that the species $A$ is ACR.  In fact, the ACR value for $A$ is still $\kappa_2/\kappa_1$.  Again, the only assumption of Theorem~\ref{thm:ACR_stoc} that is not satisfied is \eqref{part:cons}, since it can be checked that the orthogonal to the stoichiometric subspace is 
$$S^\perp=\left\{\begin{pmatrix}s\\-s\\-2s\end{pmatrix}\,:\,s\in\RR\right\},$$
which does not contain any positive vector. Define the conserved quantity $m=X_1(0)-X_2(0)-2X_3(0)$ and assume that $m<0$. We will show that no extinction set for any reaction exists, which in turn proves that no extinction event can occur.

Note that since $X$ is confined within a stoichiometric compatibility class, for any $t\geq0$
$$X_1(t)-X_2(t)-2X_3(t)=m<0.$$
 
 Assume that from $X(0)$ a state $x'$ can be reached  with $\lambda_{A+B\to 0}(x')=0$. Then, either 
 \begin{itemize}
 \item[(i)] $x'_2=0$, or
 \item[(ii)]  $x'_2>0$ and $x'_1=0$.
 \end{itemize}
  Suppose we are in case (i).  Then, since
 $$m=x'_1-2x'_3<0,$$
 we must have $x'_3>0$. Hence the reaction $C\to2B$ can take place, in which case at least one molecule of $B$ is present.  At this point, either $A+ B \to 0$ is active, or we are in case (ii).
Hence, assume that (ii) holds.  Then $B\to A+C$ can occur, which can then be followed by $C \to 2B$.  After these two reactions take place, both a molecule of $A$ and a molecule of $B$ are necessarily present, so a state is reached where the reaction $A+B\to 0$ is active.  
 
 Combining all of the above, we have proven that  no extinction set for $A+B\to 0$ is reachable from $X(0)$.  By applying Proposition~\ref{prop:dominatio}, it follows that the same holds for the reaction $B\to A+C$. Then, by Proposition~\ref{prop:consecutio} we have that no extinction set for $C\to 2B$ is reachable from $X(0)$, and finally by applying Proposition~\ref{prop:consecutio} we conclude that the same holds for $2B\to C$. In conclusion, no extinction event can occur with the chosen initial conditions.

\subsection*{Example 2: assumption~\eqref{part:def} cannot be removed}

Consider the bimolecular mass action system
\begin{equation}\label{ex:no_def_1}
 \begin{split}
  \schemestart
   A+B\arrow{->[$\kappa_1$]}B+C\arrow{<=>[$\kappa_2$][$\kappa_3$]}2B\arrow{->[$\kappa_4$]}2D
   \arrow(@c2.south east--.north east){0}[-90,.25]
   C\arrow{->[$\kappa_5$]}A
   \arrow(@c5.south east--.north east){0}[-90,.25]
   D\arrow{->[$\kappa_6$]}B
  \schemestop
  \end{split}
\end{equation}
The following holds.
\begin{description}
 \item[The species $A$ is ACR. Moreover, there exists at least one $c>0$ with $g(c)=0$.] Indeed, it can be checked that
 $$g(c)=\begin{pmatrix}
                 -\kappa_1c_1c_2+\kappa_5c_3\\
                 \kappa_2c_2c_3-\kappa_3c_2^2-2\kappa_4c_2^2+\kappa_6c_4\\
                 \kappa_1c_1c_2-\kappa_2c_2c_3+\kappa_3c_2^2-\kappa_5c_3\\
                 2\kappa_4c_2^2-\kappa_6c_4
                \end{pmatrix}$$
 is zero if and only if $c_2=c_3=c_4=0$ or
 $$c=\left(
      \frac{\kappa_3\kappa_5}{\kappa_1\kappa_2},
      s,
      \frac{\kappa_3}{\kappa_2}s,
      \frac{2\kappa_4}{\kappa_6}s^2
     \right)\quad\text{for some }s\in\RR_{>0}.
$$
 \item[The reaction network has deficiency 2.] This can be easily checked, since
 $$S=\spann_\RR\left\{
                    \begin{pmatrix}
                      -1\\ 0\\ 1\\ 0
                     \end{pmatrix},
                     \begin{pmatrix}
                      0\\ 1\\ -1\\ 0
                     \end{pmatrix},
                     \begin{pmatrix}
                      0\\ -1\\ 0\\ 1
                     \end{pmatrix}
\right\}$$
and
 $$\delta=|\C|-\ell-\dim S=8-3-3=2.$$
 \item[There are two non-terminal complexes $y\neq y'$ such that only the second entry of $y'-y$ is different from 0.] Indeed, the two complexes $B+C$ and $C$ are non-terminal and their difference is $B$. It is interesting to note that in this example the species $B$ is not the ACR species, so the conclusions of Theorem~\ref{thm:ACR_det} do not hold.
 \item[The reaction network is conservative.] Indeed, the vector $(1,1,1,1)$ is a positive conservation law.
\end{description}
Hence, all assumptions of Theorem~\ref{thm:ACR_stoc} are fulfilled except for \eqref{part:def}, and the deterministically modeled mass action system is ACR. We will show that no extinction event can take place for the stochastic mass action system, if we choose an initial condition $X(0)$ with the minimal requirements that the conserved mass $m=X_1(0)+X_2(0)+X_3(0)+X_4(0)\geq 2$ and that $X_2(0)+X_4(0)\geq 1$. Specifically, we will show that there is no extinction set for any $y\to y'\in\R$ which is reachable from $X(0)$ . 

Assume that a state $x'$ with $\lambda_{A+B\to B+C}(x')=0$ is reachable from $X(0)$. Then, there are two cases:
\begin{itemize}
 \item We have $x'_2=0$. Note that the only reaction reducing the total number of $B$ and $D$ molecules is $2B\to B+C$, which decreases the total number by 1 but can only take place if at least 2 molecules of $B$ are present. Hence, at least one molecule of $B$ or $D$ is always present (because $X_2(0)+X_4(0)\geq 1$), which implies that $x'_2+x'_4\geq 1$. More specifically, from $x'_2=0$ it follows that $x'_4\geq1$, implying that the reaction $D\to B$ can take place. Hence, a state with at least one molecule of $B$ can be reached. Either $A+B\to B+C$ is active at this state, or the following case can be considered.
 \item We have $x'_2>0$ and $x'_1=0$.  We consider three further subcases:
 \begin{itemize}
 \item If $x'_3>0$, then a molecule of $A$ can be created by the occurrence of $C\to A$, which does not modify the number of molecules of $B$. Hence, a state can be reached where $A+B\to B+C$ is active.
 \item  If $x'_3=0$ and $x'_2\geq 2$, then a molecule of $B$ can be transformed into a molecule of $C$ through $2B\to B+C$, which only consumes one molecule of $B$. This subcase is therefore reduced to the previous one.
 \item If $x'_3=0$ and $x'_2=1$, then
 $$2\leq m=x'_1+x'_2+x'_3+x'_4=0+1+0+x'_4,$$
 which implies that $x'_4\geq 1$ and an additional molecule of $B$ can be created by the occurrence of $D\to B$. This subcase is therefore reduced to the previous one.
 \end{itemize}
\end{itemize}
It follows from the above analysis that no extinction set for $A+B\to B+C$ is reachable from $X(0)$. By consecutive applycations of Proposition~\ref{prop:consecutio}, it follows that the same occurs for all the other reactions as well.
%
%
%
%
%
%

\begin{remark}
 As noted above, in this case the two non-terminal complexes $B+C$ and $C$ differ in the second entry, but the second species (that is $B$) is not ACR. This is not what prevents the occurrence of an extinction event. A similar analysis as before can be conducted on the following bimolecular mass action system:
 \begin{equation}\label{ex:no_def_1_remark}
 \begin{split}
  \schemestart
   A+B\arrow{->[$\kappa_1$]}B+C\arrow{<=>[$\kappa_2$][$\kappa_3$]}2B
   \arrow(@c1.south east--.north east){0}[-90,.25]
   B\arrow{<=>[$\kappa_4$][$\kappa_5$]}2E\arrow{->[$\kappa_6$]}2D
   \arrow(@c4.south east--.north east){0}[-90,.25]
   C\arrow{->[$\kappa_7$]}A
   \arrow(@c7.south east--.north east){0}[-90,.25]
   D\arrow{->[$\kappa_8$]}E
  \schemestop
  \end{split}
\end{equation}
For completeness, the analysis of \eqref{ex:no_def_1_remark} is carefully carried out in the Appendix. There, it is shown that the species $A$ is the only ACR species, with ACR value $(\kappa_3\kappa_7)/(\kappa_1\kappa_2)$. Moreover, the only two non-terminal complexes differing in one entry are $A+B$ and $B$, and they differ in the first entry. It is further shown that \eqref{ex:no_def_1_remark} satisfies all the assumptions of Theorem~\ref{thm:ACR_stoc}, except for \eqref{part:def}, and that no extinction event can occur provided that
\begin{align*}
 2X_1(0)+2X_2(0)+2X_3(0)+X_4(0)+X_5(0)&\geq4,\\
 2X_2(0)+X_4(0)+X_5(0)&\geq2.\\
\end{align*}
\end{remark}

\subsection*{Example 3: assumption \eqref{part:non_term} cannot be removed}

Consider the following bimolecular mass action system.
\begin{equation}\label{ex:non_non_terminal}
  \begin{split}
  \schemestart
   A+B \arrow{->[$\kappa_1$]} B+C \arrow{<=>[$\kappa_2$][$\kappa_3$]} 2B
  \arrow(@c1.south east--.north east){0}[-90,.25]
  C \arrow{->[$\kappa_4$]} A
  \schemestop
  \end{split}
\end{equation}
The following holds.
\begin{description}
 \item[The species $A$ is ACR. Moreover, there exists at least one $c>0$ with $g(c)=0$.] Indeed, it can be checked that
 $$g(c)=\begin{pmatrix}
                 -\kappa_1c_1c_2+\kappa_4c_3\\
                 \kappa_2c_2c_3-\kappa_3c_2^2\\
                 \kappa_1c_1c_2-\kappa_2c_2c_3+\kappa_3c_2^2-\kappa_4c_3
                \end{pmatrix}$$
 is zero if and only if $c_2=c_3=0$ or
 $$c=\left(
      \frac{\kappa_3\kappa_4}{\kappa_1\kappa_2},
      s,
      \frac{\kappa_3}{\kappa_2}s
     \right)\quad\text{for some }s\in\RR_{>0}.
$$
 \item[The reaction network has deficiency 1.] Indeed,
 $$S=\spann_\RR\left\{
                    \begin{pmatrix}
                      -1\\ 0\\ 1
                     \end{pmatrix},
                     \begin{pmatrix}
                      0\\ 1\\ -1
                     \end{pmatrix}
\right\}$$
and
 $$\delta=|\C|-\ell-\dim S=5-2-2=1.$$
 \item[There are no non-terminal complexes $y\neq y'$ such that $y'-y$ has only one entry different from 0.] This can be easily checked, since the only non-terminal complexes are $A+B$ and $C$.
 \item[The reaction network is conservative.] Indeed, the vector $(1,1,1)$ is a positive conservation law.
\end{description}

We will now show that no extinction event can occur for the stochastically modeled mass action system, provided that $X_2(0)\geq 1$ and the conserved mass $m=X_1(0)+X_2(0)+X_3(0)\geq 2$. Note that for any time $t\geq0$ we have $X_2(t)\geq1$, since the only reaction decreasing the number of molecules of $B$ is $2B\to B+C$, which removes one molecule of $B$ and can only take place if at least two molecules of $B$ are present. Assume that a state $x'$ is reached from $X(0)$, such that $\lambda_{A+B\to B+C}=0$. Then, it must be that $x'_1=0$. There are two cases:
\begin{itemize}
 \item We have $x_3\geq 1$. Hence, a molecule of $A$ can be created through the occurrence of $C\to A$, and a state is reached where the reaction $A+B\to B+C$ is active.
 \item We have $x_3=0$. Hence, $2\leq m=x'_2$ which means that the reaction $2B\to B+C$ can take place, and this case is reduced to the previous one.
\end{itemize}
In conclusion, no extinction sets for $A+B\to B+C$ are reachable from $X(0)$. It can be shown that the same holds for all other reactions by applying  Proposition~\ref{prop:consecutio} succesively.

\begin{remark}
 By imposing more restricitve assumptions, we can formulate new conjectures. For example, one may be tempted to try to prove that the existence of ACR species implies the occurrence of an extinction event (in the stochastic model) for binary mass action systems in which the coefficient of the species in all complexes are either 0 or 1 (in more biological terms, this implies that there is no autocatalytic production). However, we give here an example showing that this is not true. Consider the mass action system
 \begin{equation}\label{ex:stoich_1_def_1_cycle}
 \begin{split}
  \schemestart
   A+B\arrow{->[$\kappa_1$]}C+E\arrow{<=>[$\kappa_2$][$\kappa_3$]}B+D
   \arrow(@c1.south east--.north east){0}[-90,.25]
   C\arrow{->[$\kappa_4$]}A
   \arrow(@c4.south east--.north east){0}[-90,.25]
   B\arrow(B--D){->[$\kappa_5$]}D
   \arrow(@D--E){->[$\kappa_6$]}[-120]E
   \arrow(@E--@B){->[$\kappa_7$]}
  \schemestop
 \end{split}
\end{equation}
We  prove in the Appendix that $A$ is an ACR species, and no extinction event can occur for the stochastic model, provided that
\begin{align*}
 X_1(0)+X_2(0)+X_3(0)+X_4(0)+X_5(0)&\geq 2, \\
 X_2(0)+X_4(0)+X_5(0)&\geq 1. 
\end{align*}
Moreover, the only assumption of Theorem~\ref{thm:ACR_stoc} that is not satisfied by \eqref{ex:stoich_1_def_1_cycle} is  \eqref{part:non_term}.
\end{remark}

%
%
%

\section{Stationary distributions and equilibria of the associated deterministic model}\label{sec:stable_boundary}

Connections between the equilibria of a deterministic mass action system and the stationary distributions of the corresponding stochastic mass action system have been studied for a special class of models, called complex balanced mass action systems \cite{AC2016,ACK:poisson, CW:poisson, CJ:graphical}. However, in general the existence of positive equilibria for the deterministic mass action system does not imply the positive recurrence of the associated stochastic mass action system, as shown in \eqref{eq:toy_model}, \eqref{ex:endotactic_trans}, and \eqref{ex:endotactic_expl}.

Conversely, it was intuitively thought that the existence of a stationary distribution of the stochastic mass action system would imply the existence of a positive equilibrium of the associated deterministic mass action system. The idea was that a positive equilibrium could be related to the mean of the stationary distribution, or to some sort of weighted average thereof. If that were true, the lack of positive equilibria for the deterministic mass action system would have implied the transience of the states of the associated stochastic model. In particular, for models with a conservative reaction network, the transience of positive states would have implied the absorption at the boundary due to the finiteness of the state space, hence an extinction event.

The fact that lack of positive equilibria in the deterministic mass action system does not imply the transience of the associated stochastic mass action system is, however, shown in \eqref{ex:blow_pos_rec}. However, the question was still open for models with a conservative network, and we close it here with the following bimolecular example.

Consider the bimolecular mass action system
\begin{equation}
 \begin{split}
  \schemestart
  A+B \arrow{->[$\kappa_1$]} B+C \arrow{<=>[$\kappa_2$][$\kappa_3$]} 2B
  \arrow(@c1.south east--.north east){0}[-90,.25]
  C\arrow{->[$\kappa_4$]}A \arrow{<-[$\kappa_5$]} E
  \arrow(@c4.south east--.north east){0}[-90,.25]
  A+D \arrow{->[$\kappa_6$]} D+E \arrow{<=>[$\kappa_7$][$\kappa_8$]} 2D
  \schemestop
 \end{split}
\end{equation}
The following holds.
\begin{description}
 \item[The reaction network is conservative.] Indeed, it can be checked that $(1,1,1,1,1)$ is a positive conservation law.
 \item[There is no positive $c$ with $g(c)=0$ for a general choice of rate constants.] We have
  $$g(c)=\begin{pmatrix}
                 -\kappa_1c_1c_2+\kappa_4c_3+\kappa_5c_5-\kappa_6c_1c_4\\
                 \kappa_2c_2c_3-\kappa_3c_2^2\\
                 \kappa_1c_1c_2-\kappa_2c_2c_3+\kappa_3c_2^2-\kappa_4c_3\\
                 \kappa_7c_4c_5-\kappa_8c_4^2\\
                 -\kappa_5c_5+\kappa_6c_1c_4-\kappa_7c_4c_5+\kappa_8c_4^2
                \end{pmatrix}$$
  By imposing $c>0$, it follows that $g(c)=0$ is equivalent to the system
  $$
  \begin{cases}
   c_3=\frac{\kappa_3}{\kappa_2}c_2\\
   c_3=\frac{\kappa_1}{\kappa_4}c_1c_2\\
   c_5=\frac{\kappa_8}{\kappa_7}c_4\\
   c_5=\frac{\kappa_6}{\kappa_5}c_1c_4\\
  \end{cases}
  $$
  The system has a positive solution if and only if
  $$c_1=\frac{\kappa_3\kappa_4}{\kappa_1\kappa_2}=\frac{\kappa_5\kappa_8}{\kappa_6\kappa_7}.$$
  Hence, no positive equilibria exist if the rate constants do not satisfy the above equality. It is interesting to note that, when a positive equilibrium exists, the above equation implies that the species $A$ is ACR.
  \item[All positive states are positive recurrent. ] We will prove something more: each set of the form
  $$\Upsilon_m=\{x\in\ZZ_{\geq0}^5\,:\, x_2,x_4\geq 1, \|x\|_1=m\}$$
  for some $m\geq 2$ is closed and irreducible. Since the sets $\Upsilon_m$ are finite, it follows that they only contain positive recurrent states. We obtain the desired result by noting that every positive state is contained in some set $\Upsilon_m$, for some $m\geq2$. So, it suffices to show that for a given $m\geq2$ the set $\Upsilon_m$ is closed and irreducible.
  
  We begin by showing that $\Upsilon_m$ is closed. This is equivalent to showing that from a state with a least one molecule of $B$ and one molecule of $D$, it is impossible to reach a state with no molecules of $B$ or no molecules of $D$. This follows from noting that the only reaction decreasing the number of molecules of $B$ is $2B\to B+C$, which decrease the number of molecules of $B$ by one, but it can only occur if at least two molecules of $B$ are present. The same holds for the species $D$, whose molecules can only be decreased through $2D\to D+E$.
  
  To show that $\Upsilon_m$ is irreducible, we can show that for all $x\in\Upsilon_m$, the state $(m-2,1,0,1,0)$ can be reached from $x$, and viceversa. Indeed, if $X(0)=x$ then the reaction $2B\to B+C$ can take place $x_2-1$ times, and the reaction $2D\to D+E$ can take place $x_4-1$ times. Hence, the state $(x_1, 1, x_3+x_2-1, 1, x_5+x_4-1)$ is reached. Now, if all the molecules of $C$ and $E$ are consumed by the reactions $C\to A$ and $E\to A$, then the state $(\|x\|_1-2,1,0,1,0)=(m-2,1,0,1,0)$ is reached. 
  
  Conversely, if $X(0)=(m-2,1,0,1,0)$, then the reaction $A+B\to B+C$ can take place $x_2+x_3-1$ times, and the reaction $A+D\to D+E$ can take place $x_4+x_5-1$ times. The state $(x_1, 1, x_2+x_3-1, 1, x_4+x_5-1)$ is reached. From here, the state $x$ can be reached if the reaction $B+C\to 2B$ takes place $x_2-1$ times and $D+E\to 2E$ takes place $x_5-1$ times.
  \end{description}

\section*{Appendix}
\subsection*{Analysis of the mass action system \eqref{ex:no_def_1_remark}}
Consider the bimolecular mass action system \eqref{ex:no_def_1_remark}, which we repeat here for convenience.
 \begin{equation*}
 \begin{split}
  \schemestart
   A+B\arrow{->[$\kappa_1$]}B+C\arrow{<=>[$\kappa_2$][$\kappa_3$]}2B
   \arrow(@c1.south east--.north east){0}[-90,.25]
   B\arrow{<=>[$\kappa_4$][$\kappa_5$]}2E\arrow{->[$\kappa_6$]}2D
   \arrow(@c4.south east--.north east){0}[-90,.25]
   C\arrow{->[$\kappa_7$]}A
   \arrow(@c7.south east--.north east){0}[-90,.25]
   D\arrow{->[$\kappa_8$]}E
  \schemestop
  \end{split}
\end{equation*}
The following holds.
\begin{description}
 \item[The species $A$ is ACR. Moreover, there exists at least one $c>0$ with $g(c)=0$.] Indeed, we have that
 $$g(c)=\begin{pmatrix}
                 -\kappa_1c_1c_2+\kappa_7c_3\\
                 \kappa_2c_2c_3-\kappa_3c_2^2-\kappa_4c_2+\kappa_5c_5^2\\
                 \kappa_1c_1c_2-\kappa_2c_2c_3+\kappa_3c_2^2-\kappa_7c_3\\
                 2\kappa_6c_5^2-\kappa_8c_4\\
                 \kappa_4c_2-2(\kappa_5+\kappa_6)c_5^2+\kappa_8c_4
                \end{pmatrix}$$
 is zero if and only if $c_2=c_3=c_4=c_5=0$ or
 $$c=\left(
      \frac{\kappa_3\kappa_7}{\kappa_1\kappa_2},
      s,
      \frac{\kappa_3}{\kappa_2}s,
      \frac{2\kappa_4\kappa_6}{\kappa_5\kappa_8}s,
      \sqrt{\frac{\kappa_4}{\kappa_5}s}
     \right)\quad\text{for some }s\in\RR_{>0}.
$$
 \item[The reaction network has deficiency 2.] Indeed, 
 $$S=\spann_\RR\left\{
                    \begin{pmatrix}
                      -1\\ 0\\ 1\\ 0 \\ 0
                     \end{pmatrix},
                     \begin{pmatrix}
                      0\\ 1\\ -1\\ 0 \\ 0
                     \end{pmatrix},
                     \begin{pmatrix}
                      0 \\ -1 \\ 0 \\ 0 \\ 2
                     \end{pmatrix},
                     \begin{pmatrix}
                      0\\ 0\\ 0\\ 1 \\ -1
                     \end{pmatrix}
\right\}$$
and
 $$\delta=|\C|-\ell-\dim S=10-4-4=2.$$
 \item[There are two non-terminal complexes $y\neq y'$ such that only the first entry of $y'-y$ is different from 0.] The two complexes $A+B$ and $B$ are non-terminal and their difference is $A$. Note that $A$ is the only ACR species, and $A+B$ and $B$ are the only two non-terminal complexes whose difference has only one entry that is not zero.
 \item[The reaction network is conservative.] Indeed, the vector $(2,2,2,1,1)$ is a positive conservation law.
\end{description}

We will show that no extinction event can take place for the stochastically modeled mass action system, provided that
\begin{align*}
 2X_1(0)+2X_2(0)+2X_3(0)+X_4(0)+X_5(0)&\geq4,\\
 2X_2(0)+X_4(0)+X_5(0)&\geq2.\\
\end{align*}

Assume that a state $x'$ with $\lambda_{A+B\to B+C}(x')=0$ is reachable from $X(0)$. Then, there are two cases.
\begin{itemize}
 \item We have $x'_2=0$. For any $t\geq0$, consider the quantity
 $$h(t)=2X_2(t)+X_4(t)+X_5(t).$$
 The only reaction capable of reducing $h(t)$  is $2B\to B+C$. This reaction decreases $h(t)$ by 2, but it can only take place if at least 2 molecules of $B$ are present, in which case $h(t)\geq4$. Hence, under the assumption that $m(0)\geq 2$, for all $t\geq0$ we necessarily have $h(t)\geq2$. Since $x'$ is reachable from $X(0)$ and $x'_2=0$, we have $x'_4+x'_5\geq2$. By potentially letting the reaction $D\to E$ take place, we may assume that $x'_4\geq2$. Hence, the reaction $2E\to B$ can occur and the number of molecules of $B$ can become positive. At this point, either a state where $A+B\to B+C$ is active is reached, or we consider the following case.
  \item We have $x'_2>0$ and $x'_1=0$.  We have three subcases:
 \begin{itemize}
 \item If $x'_3>0$, then a molecule of $A$ can be created by the occurrence of $C\to A$, which does not modify the number of molecules of $B$. Hence, a state can be reached where $A+B\to B+C$ is active.
 \item  If $x'_3=0$ and $x'_2\geq 2$, then a molecule of $B$ can be transformed into a molecule of $C$ through $2B\to B+C$, which only consumes one molecule of $B$. This subcase is therefore reduced to the previous one.
 \item If $x'_3=0$ and $x'_2=1$, then
 $$4\leq 2x'_1+2x'_2+2x'_3+x'_4+x'_5=0+2+0+x'_4+x'_5,$$
 which implies that $x'_4+x'_5\geq 2$. By potentially using the reaction $D\to E$, we can assume that $x'_4\geq2$ and a molecule of $B$ can be created by the reaction $2E\to B$. This subcase is therefore reduced to the previous one.
 \end{itemize}
\end{itemize}
It follows that no extinction set for $A+B\to B+C$ is reachable from $X(0)$. By consecutive applycations of Proposition~\ref{prop:consecutio}, it follows that the same occurs for all the other reactions as well, so no extinction event can occur.
 
 \subsection*{Analysis of the mass action system \eqref{ex:stoich_1_def_1_cycle}}
 Consider the bimolecular mass action system \eqref{ex:no_def_1_remark}, which we repeat here for convenience.
 \begin{equation*}
 \begin{split}
  \schemestart
   A+B\arrow{->[$\kappa_1$]}C+E\arrow{<=>[$\kappa_2$][$\kappa_3$]}B+D
   \arrow(@c1.south east--.north east){0}[-90,.25]
   C\arrow{->[$\kappa_4$]}A
   \arrow(@c4.south east--.north east){0}[-90,.25]
   B\arrow(B--D){->[$\kappa_5$]}D
   \arrow(@D--E){->[$\kappa_6$]}[-120]E
   \arrow(@E--@B){->[$\kappa_7$]}
  \schemestop
 \end{split}
\end{equation*}
We have the following.
\begin{description}
 \item[The species $A$ is ACR. Moreover, there exists at least one $c>0$ with $g(c)=0$.] Indeed, it can be checked that
 $$g(c)=\begin{pmatrix}
                 -\kappa_1c_1c_2+\kappa_4c_3\\
                 -\kappa_1c_1c_2+\kappa_2c_3c_5-\kappa_3c_2c_4-\kappa_5c_2+\kappa_7c_5\\
                 \kappa_1c_1c_2-\kappa_2c_3c_5+\kappa_3c_2c_4-\kappa_4c_3\\
                 \kappa_2c_3c_5-\kappa_3c_2c_4+\kappa_5c_2-\kappa_6c_4\\
                 \kappa_1c_1c_2-\kappa_2c_3c_5+\kappa_3c_2c_4+\kappa_6c_4-\kappa_7c_5\\
                \end{pmatrix}$$
 is zero if and only if $c_2=c_3=c_4=c_5=0$ or
 $$c=\left(
      u,
      s,
      \frac{\kappa_1 u}{\kappa_4}s,
      \frac{\kappa_5}{\kappa_6}s,
      \frac{\kappa_5+\kappa_1u}{\kappa_7}s
     \right)\quad\text{for some }s\in\RR_{>0},
$$
where $u$ is the unique positive real number satisfying
$$\kappa_1^2\kappa_2\kappa_6 u^2+\kappa_1\kappa_2\kappa_5\kappa_6 u - \kappa_3\kappa_4\kappa_5\kappa_7=0,$$
namely
$$u=\frac{-\kappa_2\kappa_5\kappa_6+\sqrt{\kappa_2^2\kappa_5^2\kappa_6^2+4\kappa_2\kappa_3\kappa_4\kappa_5\kappa_6\kappa_7}}{2\kappa_1\kappa_2\kappa_6}.$$
 \item[The reaction network has deficiency 1.] Indeed,
 $$S=\spann_\RR\left\{
                    \begin{pmatrix}
                      -1\\ -1\\ 1\\ 0\\1
                     \end{pmatrix},
                     \begin{pmatrix}
                      0\\ 1\\ -1\\1\\-1
                     \end{pmatrix},
                     \begin{pmatrix}
                      1\\ 0\\ -1\\0\\0
                     \end{pmatrix},
                     \begin{pmatrix}
                      0\\ -1\\ 0\\1\\0
                     \end{pmatrix}
\right\}$$
and
 $$\delta=|\C|-\ell-\dim S=8-3-4=1.$$
 \item[There are no non-terminal complexes $y\neq y'$ such that $y'-y$ has only one entry different from 0.] This can be easily checked, since the only non-terminal complexes are $A+B$ and $C$.
 \item[The reaction network is conservative.] Indeed, the vector $(1,1,1,1,1)$ is a positive conservation law.
\end{description}
 We will show that under the assumption
 \begin{align*}
 m=X_1(0)+X_2(0)+X_3(0)+X_4(0)+X_5(0)\geq 2, 
 X_2(0)+X_4(0)+X_5(0)\geq 1,
\end{align*}
 no extinction event can occur for the stochastic model. To this aim, first note that the quantity 
 \[
 	h(t)=X_2(t)+X_4(t)+X_5(t)
\]
 is always greater than or equal to 1. Indeed, the only reaction that can decrease this quantity is 
 \[
 	B+D\to C+E.
\]
However,  under the action of this reaction $h(t)$ decreases by 1, but the reaction is active only if at least one molecule of $B$ and one molecule of $D$ are present, implying $h(t)=2$.
 Assume that a state $x'$ is reachable from $X(0)$, with $\lambda_{A+B\to C+E}(x')=0$. This implies that one of the two following cases occurs.
 \begin{itemize}
  \item We have $x'_2=0$. Since $h(t)\geq1$ for all $t\geq0$, it follows that at least one molecule of $D$ or one molecule of $E$ is present. Hence, a molecule of $B$ can be created by the reactions $D\to E$ and $E\to B$. Either a state where $A+B\to C+E$ is active is reached, or we are in the following case.
  \item We have $x'_1=0$ and $x'_2\geq1$. One of the two following subcases holds.
  \begin{itemize}
   \item We have $x'_3\geq1$. Then, a molecule of $A$ can be created through $C\to A$ and a state is reached where $A+B\to C+E$ is active.
   \item We have $x'_3=0$. Hence,
   $$2\leq m=x'_1+x'_2+x'_3+x'_4+x'_5=x'_2+x'_4+x'_5.$$
   Thanks to the reactions $B\to D$, $D\to E$, and $E\to B$, we can transform all the molecules of $B$, $D$, and $E$ into at least one molecule of $B$ and one molecule of $D$, so that a state where $B+D\to C+E$ is active is reached. Upon the action of   $B+D\to C+E$, a molecule of $C$ is then produced, and this subcase reduces to the previous one.
  \end{itemize}
 \end{itemize}
In conclusion, no extinction set for $A+B\to C+E$ is reachable from $X(0)$. By applying Proposition~\ref{prop:consecutio}, it follows that the same holds for the other reactions. Hence, no extinction event can occur.
 
 \bibliographystyle{plain}
 \bibliography{bib}
\end{document}